\newcommand{\ber}{\text{Ber}}
\newcommand{\Copt}{C^{\scriptscriptstyle \textnormal{opt}}}
\newcommand{\Ctpp}{C^{ \scriptscriptstyle \textnormal{tpp}}}
\newcommand{\Ctop}{C^{\scriptscriptstyle \textnormal{top}}}
\newcommand{\expd}{\text{Exp}}
\newcommand{\indic}[1]{\mathbbm{1}_{#1}}
\newcommand{\lambdazz}{\lambda_{\text{ZZ}}}
\newcommand{\lambdabps}{\lambda_{\text{BPS}}}
\newcommand{\multinom}{\text{Multinom}}
\newcommand{\NextEvent}{\textsf{NextEvent}}
\newcommand{\noarg}{\, \cdot \,}
\newcommand{\UpperBound}{\textsf{UpperBound}}
\newcommand{\sprod}[2]{\left\langle #1,#2 \right\rangle}
\newcommand{\tmax}{t_{\textnormal{max}}}
\newcommand{\Rset}{\mathbb{R}}
\newcommand{\dif}{\mathrm{d}}
\newtheorem{theorem}{Theorem}
\newtheorem{remark}{Remark}
\title{Automated Techniques for Efficient Sampling of Piecewise-Deterministic Markov Processes}
\author[1]{Charly Andral \thanks{andral@ceremade.dauphine.fr \\Part of this work was done while the author was part of the International Internship Program of the Institute of Statistical Mathematics, Tokyo, Japan.}}
\author[2]{Kengo Kamatani \thanks{kamatani@ism.ac.jp \\This work is supported by JST CREST Grant Number JPMJCR2115 and JSPS KAKENHI Grant Number JP21K18589 and JP20H04149}}
\affil[1]{CEREMADE, CNRS, Université Paris-Dauphine, Université PSL, 75016 PARIS, FRANCE}
\affil[2]{Institute of Statistical Mathematics, TOKYO, JAPAN}
\date{}
\begin{document}

\maketitle

%\tableofcontents

\begin{abstract}
Piecewise deterministic Markov processes (PDMPs) are a class of continuous-time Markov processes that were recently used to develop a new class of Markov chain Monte Carlo algorithms. However, the implementation of the processes is challenging due to the continuous-time aspect and the necessity of integrating the rate function. Recently, \textcite{corbellaAutomaticZigZagSampling2022} proposed a new algorithm to automate the implementation of the Zig-Zag sampler. However, the efficiency of the algorithm highly depends on a hyperparameter ($\tmax$) that is fixed all along the run of the algorithm and needs preliminary runs to tune. In this work, we relax this assumption and propose a new variant of their algorithm that let this parameter change over time and automatically adapt to the target distribution. We also replace the Brent optimization algorithm by a grid-based method to compute the upper bound of the rate function. This method is more robust to the regularity of the function and gives a tighter upper bound while being quicker to compute. 
We also extend the algorithm to other PDMPs and provide a Python implementation of the algorithm based on JAX. 
\end{abstract}

\section{Introduction}

Markov Chain Monte Carlo (MCMC) has been a cornerstone of simulation-based inference for the last 30 years. However, most of the methods are based on the Metropolis--Hastings algorithm, which create a reversible discrete-time Markov chains by adding a rejection step. This leads to a random walk behavior that can be inefficient in some settings. This analysis leads \cite{HOROWITZ1991247} to use a partial velocity refreshment for Hybrid Monte Carlo. At the end of the 1990s, some authors \parencite{diaconisAnalysisNonreversibleMarkov2000,chenLiftingMarkovChains1999} proposed new methods based on nonreversible Markov chains and showed that they could be more efficient than the reversible ones. 

The idea of nonreversible Markov chains was further developed by physicists at the late 2000s and early 2010s, with different version of Event-Chain Monte Carlo \parencite{bernardEventchainMonteCarlo2009,michelGeneralizedEventchainMonte2014}, first applied to the hard sphere model, or the work of \textcite{peters2012rejection}.

Finally the idea of nonreversible Markov chains was developed by statisticians as an alternative to classical MCMC \parencite{GUSTAFSON1998, bierkensNonreversibleMetropolisHastings2016,Kamatani2023}. In the mid 2010s, a new family of MCMC algorithms was proposed, based on piecewise deterministic Markov processes (PDMPs)  \parencite{davisPiecewiseDeterministicMarkovProcesses1984,davisMarkovModelsOptimization1993}. \textcite{bouchard-coteBouncyParticleSampler2018} extended the Bouncy Particle Sampler which was originally proposed by \cite{peters2012rejection}. 
 The time discrete lifted Markov chain was extended to continuous time by \cite{bierkensPiecewiseDeterministicScaling2017} and then extended to multidimensional target by the Zig-Zag Sampler \parencite{bierkensZigZagProcessSuperefficient2019}. Both methods, Zig-Zag Sampler and Bouncy Particle Sampler, are time continuous and rejection-free. Then many other PDMP schemes were proposed, such as the Forward Event-Chain Monte Carlo \parencite{michelForwardEventChainMonte2020}, the Boomerang Sampler \parencite{bierkensBoomerangSampler2020}, the Coordinate Sampler \parencite{wuCoordinateSamplerNonreversible2020} or the speedup Zig-Zag \parencite{vasdekisSpeedZigZag2022} among others.

The main drawback of PDMPs compared to time-discrete MCMC is the difficulty to sample them, due to the continuous-time aspect and the necessity to integrate the rate function. Several recent works have proposed methods to tackle this issue. For instance, \textcite{corbellaAutomaticZigZagSampling2022,suttonConcaveConvexPDMPbasedSampling2023} propose new methods to construct upper bounds of the rate function to simulate the process, in a more automated way. This leads to an exact simulation of the process. Other lines of research are to approximate the process, for instance by discretizing it \parencite{bertazziApproximationsPiecewiseDeterministic2022} or by using numerical solvers \parencite{paganiNuZZNumericalZigZag2024}, while providing bounds on the error of the approximation.

The goal of this paper is to extend the automatic Zig-Zag algorithm from \textcite{corbellaAutomaticZigZagSampling2022}. We propose a new method to tune automatically the hyperparameter $\tmax$ they use, and replace the computation of the maximum of the rate function by a grid-based method to obtain a piecewise constant upper bound. This method is more robust to the regularity of the function, which can be useful is the target distribution is not convex, while providing a tighter upper bound and a faster computation. We apply this new method to the Zig-Zag sampler but also other types of PDMPs. We provide a Python implementation of the algorithm based on JAX \parencite{jax2018github}, making it easy to use and to extend to other PDMPs. It is available on GitHub at \url{https://github.com/charlyandral/pdmp_jax} on PyPI at \url{https://pypi.org/project/pdmp-jax}. 

The outline of the paper is as follows. In Section \ref{sec:pdmp}, we present the general framework of PDMPs and some examples of PDMPs that will be used in the experiments. In Section \ref{sec:automatic_zigzag}, we present the automatic Zig-Zag sampler from \textcite{corbellaAutomaticZigZagSampling2022} and analyze the role of the hyperparameter $\tmax$. In Section \ref{sec:general_method}, we present our new method to adapt $\tmax$ and to compute the upper bound of the rate function, while providing a theoretical proof of the correctness of the method. 
 In Section \ref{sec:numerical_experiments}, we present the results of the experiments we conducted to compare the efficiency of the new method with the automatic Zig-Zag sampler.

\section{Piecewise-deterministic Markov processes}
\label{sec:pdmp}

\subsection{Definition of a Piecewise-deterministic Markov processes}

Let $\Pi$ be a probability measure on $\mathbb{R}^d$ that admits a density $\pi(x)$ with respect to the Lebesgue measure. We are interested in sampling from $\Pi$. Define $U(x)$ the potential of $\Pi$ defined such that $\pi(x) \propto \exp(-U(x))$. As for the Markov Chain Monte Carlo, the density needs only to be known up to a multiplicative constant, which means that the potential $U$ can be known up to an additive constant. This will not affect the process as the potential is only used through its gradient.

 The space $\mathbb{R}^d$ for the location of the process is extended by adding a velocity component living in a velocity space $\mathcal{V} \subseteq \mathbb{R}^d $. This velocity space can be discrete (e.g. $\mathcal{V}= \{-1,1\}^d$) or continuous ($\mathcal{V} = \mathbb{R}^d$ or $\mathcal{V} = \mathbb{S}^{d-1})$. 

 The goal is to create a continuous-time process $Z_t = (X_t,V_t)$ on $\mathbb{R}^d \times \mathcal{V}$ that admits $\Pi$ as the first marginal of its invariant distribution $\rho(\dif x\dif v)$. More generally, a piecewise deterministic Markov process (PDMP) is a continuous-time Markov process defined by three elements:     

 \begin{enumerate}
    \item A deterministic dynamics that follows the process between two events. This dynamics is defined by a differential equation $\partial_t Z_t = \Psi(Z_t)$ for some function $\Psi$. To this ODE corresponds flow function $\phi_t(x,v)$ that gives the location of the process at time $t$ given its location and velocity at time 0. For instance, for the ZZS and the BPS, we have $\partial_t (X_t,V_t) = (V_t,0)$ and thus $\phi_t(x,v) = (x + t v,v)$.
    \item The rate of the events. The events are the jumps of the velocity of the process, while keeping the location constant. They correspond to the event of a time-inhomogeneous Poisson process with a rate function $\lambda(z)$.  Using the flow function, we will write $\lambda(t;x,v)$ for $\lambda(\phi_t(x,v))$, or even $\lambda(t)$ if the starting point does not matter.
    \item The transition kernel of the velocity. When an event occurs, the velocity jumps to a new value $v'$ according to a transition kernel $Q(z,\dif v')$. 
 \end{enumerate}

 \newcommand{\gen}{\mathcal{L}}
    The extended generator of the process at $z = (x,v)$ is given by 
    \begin{equation*}
        \gen f(z) = \nabla f(z) \cdot \Psi(z) + \lambda(z) \int_{\mathcal{V}} \left( f(x,v') - f(x,v) \right) Q(z,\dif v')
    \end{equation*}

    Defining a PDMP is therefore equivalent to giving the triplet $(\phi_t,\lambda,Q)$ (or $(\Psi, \lambda,Q)$).

\subsection{Some examples of PDMPs}
We quickly present some examples of PDMPs used for sampling that we will consider in the experiments, by giving the expression of $\lambda$, $\Psi$ and $Q$ for each of them.
\subsubsection{The Zig-Zag sampler}
We recall the expression of $\lambda$, $\Psi$ and $Q$ for the Zig-Zag sampler \parencite{bierkensZigZagProcessSuperefficient2019}.  The velocity space is $\mathcal{V} = \{-1,1\}^d$, and its invariant measure is $\Pi \otimes \mathcal{U}(\mathcal{V})$. The deterministic dynamics is given by $\Psi(z) = (v,0)$ which corresponds to the flow function $\phi_t(x,v) = (x + t v,v)$.
 The rate function is given by 
\newcommand{\underlambda}{\underline{\lambda}}
\begin{equation*}
    \lambda(x,v) = \sum_{i=1}^d (\nabla_i U(x) v_i)_+ =: \sum_{i=1}^d \lambda_i(x,v)
\end{equation*}
 We denote $F_m(v)$ the velocity obtained by flipping the $m$-th component of $v$, e.g. For $v = (1,-1,1)$, $F_2(v) = (1,1,1)$. The transition kernel of the ZZS is given by

 \begin{equation*}
        Q(x,v,\dif v') =  \sum_{i=1}^d \frac{\lambda_i(x,v)}{\lambda(x,v)} \delta_{F_i(v)}(\dif v')
\end{equation*}
where $Q(x,v,\dif v')=\delta_v(\dif v')$ if $\lambda(x,v) = 0$.

\subsubsection{The Bouncy Particle sampler}
For the Bouncy Particle Sampler \parencite{peters2012rejection, bouchard-coteBouncyParticleSampler2018}, the velocity space is $\mathcal{V} = \mathbb{R}^d$, and its invariant measure is $\Pi \otimes \mathcal{N}(0,I_d)$. The deterministic dynamics is the same as for the ZZS, i.e. $\phi_t(x,v) = (x + t v,v)$. The rate function is given by:

\begin{equation*}
    \lambda(x,v) = \sprod{\nabla U(x)}{v}_+  + \underlambda
\end{equation*}
where $\underlambda$ is called the refreshment rate and is a constant that can be chosen to ensure the ergodicity of the process. 

To define the transition kernel, we need to define the reflection operator. For $v \in \mathbb{R}^d$, the reflection operator $R_v$ is defined by $R_v(w) = w - 2 \frac{\sprod{w}{v}}{\sprod{v}{v}}v$ and corresponds to the reflection of $w$ with respect to the hyperplane orthogonal to $v$. The transition kernel of the BPS is given by
\begin{equation*}
    Q((x,v),\dif v') = \frac{\sprod{\nabla U(x)}{v}_+}{\lambda(x,v)} \delta_{R_{\nabla U(x)}(v)}(\dif v') + \frac{\underlambda}{\lambda(x,v)} \mathcal{N}(\dif v';0,I_d)
\end{equation*}

\subsubsection{The Boomerang sampler}
The Boomerang sampler \parencite{bierkensBoomerangSampler2020} is a variation of the Bouncy Particle Sampler. The velocity space is $\mathcal{V} = \mathbb{R}^d$, and its invariant measure is $\Pi' \otimes \mathcal{N}(0,I_d)$, where $\Pi'$ is the distribution with potential $U'(x) = U(x) + \frac{1}{2} \sprod{x}{x}$. 
The deterministic dynamics is different from the BPS and is given by:
\begin{equation*}
    \phi_t(x,v) = (x \cos(t) + v \sin(t),  -x \sin(t) + v \cos(t))
\end{equation*}

The rate function is the same as for the Bouncy Particle sampler : $\lambda(x,v) = \sprod{\nabla U(x)}{v}_+  + \underlambda,$ for a refreshment rate $\underlambda$. The transition kernel is also the same as for the BPS.
\subsubsection{Forward event-chain Monte Carlo}

The forward event-chain (FEC) Monte Carlo \parencite{michelForwardEventChainMonte2020} is a variation of the Bouncy Particle Sampler where the velocity jumps are not limited to bouncing.  We will consider in the experiment the case where the velocity space is the sphere $\mathcal{V} = \mathbb{S}^{d-1}$ with the invariant distribution, such that the invariant measure of process is $\Pi \otimes \mathcal{U}(\mathcal{V})$. The dynamics is the same as for the BPS, i.e. $\phi_t(x,v) = (x + t v,v)$.

The rate function is given by $\lambda(x,v) = \sprod{\nabla U(x)}{v}_+$ with no refreshment rate. The transition kernel is decomposed in two parts: one acts on the line directed by the gradient of the potential, and the other on its orthogonal. The article of \textcite{michelForwardEventChainMonte2020} describe several possibilities for each part. For instance, one can choose to refresh to parallel component while slightly rotate the orthogonal part.  This decomposition was also used in \textcite{vanettiPiecewiseDeterministicMarkovChain2018, Wu2017}.

\subsection{Simulating PDMPs}

As in most cases the integrator is defined in closed form, the main difficulty in simulating PDMPs is to simulate the time of the next event. For a process starting at $(x,v)$, the time of the next event is distributed as an exponential random variable with rate $\lambda(t;x,v)$. This can be simulated by drawing a random variable $u \sim \expd(1)$ and finding the time $\tau$ such that
\begin{equation}
    \label{eq:time_of_next_event}
    \int_0^\tau \lambda(t;x,v)\dif t = u. 
\end{equation}
In practice and in the general case, this integral cannot be computed analytically. 

A way to circumvent this issue is to use the thinning algorithm. If we do not know how to integrate $\lambda$ but we know an upper bound $\Lambda$ such that $\lambda(t;x,v) \leq \Lambda(t)$ for all $t$ that we know how to integrate, the thinning algorithm gives us a way to simulate the time of the next event. We simulate the first event $\tau^*$  from a Poisson process with rate $\Lambda$. Then, we accept $\tau^*$ with probability $\lambda(\tau^*)/\Lambda(\tau^*)$. Otherwise, we simulate the next event from the Poisson process with rate $\Lambda$ and repeat the process. 

\begin{theorem}{\parencite{lewisSimulationNonhomogeneousPoisson1979}}
    \label{thm:thinning}
    Let a Poisson point process with inhomogeneous rate $\Lambda(t)$ be such that $\lambda(t) \leq \Lambda(t)$ for all $t$. The thinning algorithm described above generates a Poisson point process with rate $\lambda(t)$.
\end{theorem}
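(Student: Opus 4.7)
The plan is to reduce the theorem to the classical marking property of Poisson point processes. Recall that a simple point process $N$ on $\Rset_+$ is an inhomogeneous Poisson process of rate $\mu$ if and only if, for every finite family of disjoint Borel sets $A_1,\ldots,A_k$, the counts $N(A_1),\ldots,N(A_k)$ are independent with $N(A_i)\sim\text{Poisson}\bigl(\int_{A_i}\mu(t)\dif t\bigr)$. So I would fix such sets and compute the joint law of the accepted counts $N_\lambda(A_i)$, where $N_\lambda$ denotes the output of the thinning algorithm.

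First, I would note that the sequential ``draw the next $\Lambda$-arrival, then accept with probability $\lambda(\tau)/\Lambda(\tau)$'' procedure is distributionally equivalent to the following marking construction: realize the full dominating Poisson process $N_\Lambda$ of rate $\Lambda$, then \emph{independently} mark each point $\tau$ of $N_\Lambda$ as ``kept'' or ``discarded'' with probabilities $\lambda(\tau)/\Lambda(\tau)$ and $1-\lambda(\tau)/\Lambda(\tau)$. This identification, which I expect to be the main conceptual step of the proof, follows from the inter-arrival representation of $N_\Lambda$ combined with the independence of the uniform variables used in the acceptance tests at each arrival.

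Second, I would condition on $N_\Lambda$. By the standard order-statistics property of Poisson processes, given $N_\Lambda(A_i)=n_i$, the $n_i$ atoms in $A_i$ are i.i.d.\ with density $\Lambda(t)/\int_{A_i}\Lambda(s)\dif s$ on $A_i$, and atoms in disjoint sets are conditionally independent. Each atom $\tau\in A_i$ is kept independently with probability $\lambda(\tau)/\Lambda(\tau)$, hence
\begin{equation*}
    N_\lambda(A_i)\mid N_\Lambda(A_i)=n_i \;\sim\; \text{Binomial}(n_i,p_i),\qquad p_i := \frac{\int_{A_i}\lambda(t)\dif t}{\int_{A_i}\Lambda(t)\dif t},
\end{equation*}
with conditional independence across $i$.

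Finally, I would uncondition using the elementary Poisson–binomial identity: if $n\sim\text{Poisson}(m)$ and $X\mid n\sim\text{Binomial}(n,p)$, then $X\sim\text{Poisson}(pm)$. Since the $N_\Lambda(A_i)$ are independent Poissons with means $\int_{A_i}\Lambda(t)\dif t$, this yields that the $N_\lambda(A_i)$ are independent Poissons with means $p_i\int_{A_i}\Lambda(t)\dif t = \int_{A_i}\lambda(t)\dif t$, which is the defining property of an inhomogeneous Poisson process with rate $\lambda$. Once the distributional identification in the first step is in place, the remainder is routine Poisson bookkeeping.
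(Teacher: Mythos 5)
The paper does not prove this theorem; it states it as a citation to Lewis and Shedler (1979), so there is no in-paper argument to compare against. Your proposed proof is correct and is the standard marking (coloring) theorem proof: the distributional identification of the sequential accept/reject procedure with independent, position-dependent marking of the points of the dominating process $N_\Lambda$ is the essential step, and you rightly flag it as the one requiring care (it follows from the strong Markov property of $N_\Lambda$ together with the independence of the successive acceptance coins). After that, conditioning on the counts $N_\Lambda(A_i)$ via the order-statistics representation and applying the Poisson--binomial compound identity is routine bookkeeping. The one detail worth spelling out if you write this up in full is that the binomial success probability $p_i$ is obtained by integrating the pointwise acceptance probability $\lambda(\tau)/\Lambda(\tau)$ against the conditional location density $\Lambda(\tau)/\int_{A_i}\Lambda(s)\dif s$, which you use implicitly when asserting the binomial conditional law.
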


\section{The automatic Zig-Zag sampler from \textcite{corbellaAutomaticZigZagSampling2022}}

\label{sec:automatic_zigzag}

\subsection{The algorithm}

The main contribution of \textcite{corbellaAutomaticZigZagSampling2022} is to automate the thinning algorithm by finding automatically the upper bound $\Lambda$ they choose to be constant $\Lambda \in \mathbb{R}$. As finding an upper bound for every $t$ is a difficult task, they propose to compute it only on an interval $[0,\tmax]$ and use it locally. This correspond to take for a process at position $(x,v)$ the upper bound 

\begin{equation*}
    \Lambda  = \max_{t \in [0,\tmax]} \lambda(t;x,v).
\end{equation*}
If the time of the next event computed using $\Lambda$ is larger than $\tmax$, the process move to $\tmax$ and the upper bound is recomputed. 
The hyperparameter $\tmax$ is fixed and needs to be tuned, and as they show in Figure 5 of their paper, its value has a great impact on the number of gradient evaluations and thus on the efficiency of the algorithm. The optimization on $t$ is done using Brent's algorithm with a slight modification. 

They make the assumption that the function is monotonic on the interval $[0,\tmax]$, which is a reasonable assumption for the rate function if you consider $\tmax$ small enough such that the process is only visiting one mode. Therefore they propose to check if the maximum is reached at $0$ or at $\tmax$ by doing only one iteration of the Brent's algorithm. If it is the case they stop, otherwise they continue the optimization. 

The full algorithm is presented in Algorithm \ref{alg:automatic_zigzag} for the special case of Zig-Zag. 

\newcommand{\tauopt}{\tau^{opt}}
\begin{algorithm}
    \label{alg:automatic_zigzag}
    \DontPrintSemicolon
    \SetKw{And}{and}
    \SetKwComment{Comment}{\# }{}
    \SetCommentSty{itshape}
    \caption{Automatic Zig-Zag sampler from \textcite{corbellaAutomaticZigZagSampling2022}}
    \KwIn{Initial location $x_0$, initial velocity $v_0$, number of skeleton points $K$, rate functions $\lambda =  \sum_{i=1}^l \lambda_i$,$\tmax$ the tuning parameter}
    \KwOut{Time, location and velocity of the skeleton points}
    $t_0 = 0$, $k =1$ \Comment*{set starting time and skeleton count} 
    $t= t_0$, $x = x_0$, $v = v_0$ \Comment*{set current state of the process} 
    $\bar{\lambda} = \max_{t \in [0,\tmax]}{\lambda(t;x,v)} $ \Comment*{compute upper bound}
    $\tau^* \sim \expd (\bar{\lambda})$ \Comment*{propose switching time}
    $\tau^{opt} = \tau^*$ \Comment*{track time from last optimization}
    \While{$k \leq K$}{ \label{alg:line:while:main}
        $u = 0$ \Comment*{set acceptance to 0 until next proposal}
        \While{$\tauopt \leq \tmax$ \And $u=0$}{ \label{alg:line:while:inner}
            $\lambda(\tauopt) = \sum_{i=1}^d \lambda_i(\tauopt:x,v) $ \Comment*{propose switching time} \label{alg:line:lambda:sum:computation}
            $u \sim \ber(\lambda(\tauopt)/\bar{\lambda})$ \Comment*{accept or reject the proposal}
            \If{$u=1$}{ \label{alg:line:if:accept}
                $m \sim \multinom \left(1 : d, \left\{\frac{\lambda_i(\tauopt;x,v)}{\lambda(\tau^*) } \right\}_{i=1}^d \right)$ \Comment*{compute the global rate at $\tauopt$}
                $t = t + \tauopt$  \Comment*{progress time}
                $x = x + \tauopt v$ \Comment*{progress location}
                $v = F_m(v)$ \Comment*{flip the velocity of dimension $m$}
                $t_k = t$, $x_k = x$, $v_k = v$ \Comment*{save skeleton point}
                $k = k+1$ \Comment*{increment skeleton count}
                $\bar{\lambda} = \max_{t \in [t,\tmax]}{\lambda(t;x,v)} $ \Comment*{compute new upper bound}
                $\tau^* \sim \expd (\bar{\lambda})$ \Comment*{propose new switching time}
                $\tauopt = \tau^*$ \Comment*{reset time from last optimization}
            }
            \Else{ \label{alg:line:if:reject}
                $\tau^* \sim \expd (\bar{\lambda})$ \Comment*{propose new time increment}
                $\tauopt =  \tauopt+ \tau^*$ \Comment*{compute new switching proposal}
            }
        }
        \If(\Comment*[f]{if the horizon is reached with no flip}){$\tauopt > \tmax$ \And $u=0$}{ \label{alg:if:horizon:reached}
            $t = t + \tmax$ \Comment*{progress time to the horizon}
            $x = x + \tmax v$  \Comment*{progress location until the horizon}
            $v = v$ \Comment*{retain the velocity}
            $\bar{\lambda} = \max_{t \in [0,\tmax]}{\lambda(t;x,v)} $ \Comment*{compute new upper bound}
            $\tau^* \sim \expd (\bar{\lambda})$ \Comment*{propose new switching time}
            $\tauopt = \tau^*$ \Comment*{reset time from last optimization}
        }
    }
    \Return{$ \left\{t_k,x_k,v_k\right\}_{k=1}^K$}
\end{algorithm}

\subsection{Analysis of the role of the hyperparameter $\tmax$}

The hyperparameter $\tmax$ is crucial for the efficiency of the algorithm, as seen before. In this section, we will interpret the influence of $\tmax$ in the two extreme cases where $\tmax$ is very small and very large. 

\subsubsection{Small $\tmax$ regime}
When $\tmax$ is very small, the optimization when computing $\Lambda$ is done on a very small interval. Therefore, the function will not vary a lot and the upper bound by a constant will be rather sharp, which will lead to a good acceptance probability in the thinning algorithm. However, as the process is not allowed to move for a time longer than $\tmax$, the process will very often reach the horizon without flipping (the condition at line \ref{alg:if:horizon:reached} of Algorithm \ref{alg:automatic_zigzag} will be often true). The process will move in time by  $\tmax$ and a new upper bound needs to be computed. This will lead to a lot of gradient evaluations in the optimization step while the process is barely moving. 

\subsubsection{Large $\tmax$ regime}
When $\tmax$ is very large, the upper bound is computed over a large interval and gets very loose. This leads to a very low acceptance probability in the thinning algorithm. The process will often need to simulate a lot of events before accepting one.  Each time, the computation of the rate for the accept/reject step (line \ref{alg:line:lambda:sum:computation} of the algorithm) requires gradient evaluations. As the upper bound is kept the same unless accepting or reaching the horizon, the loss of efficiency is in this case due to the low acceptance probability.

\subsubsection{Counting the number of gradient evaluations}

As detailed in the equation 12 of \cite{corbellaAutomaticZigZagSampling2022}, the number of gradient evaluations can be decomposed in two parts. The first part is the number of gradient evaluations for the optimization of $\Lambda$, denoted $\Copt$ and the second part is the number of gradient evaluations when computed the rate for the accept/reject step of the thinning algorithm, denoted $\Ctpp$. 

The total number of gradient evaluations $C$ is given by

\begin{equation*}
    \Ctop = \Copt + \Ctpp.
\end{equation*}

The two regimes we have presented correspond to this trade-off. When $\tmax$ is small, $\Copt$ is large and $\Ctpp$ is small, and when $\tmax$ is large, $\Copt$ is small and $\Ctpp$ is large. Choosing the right value of $\tmax$ is therefore crucial to balance the two terms. 

This is illustrated in Figure \ref{fig:tmax:effect:non_adaptive} where $\Copt$ and $\Ctpp$ are plotted as a function of $\tmax$ for the Zig-Zag sampler, targeting a 30-dimensional standard Gaussian distribution.

\begin{figure}[h] 
    \centering
    \includegraphics[width=0.7\textwidth]{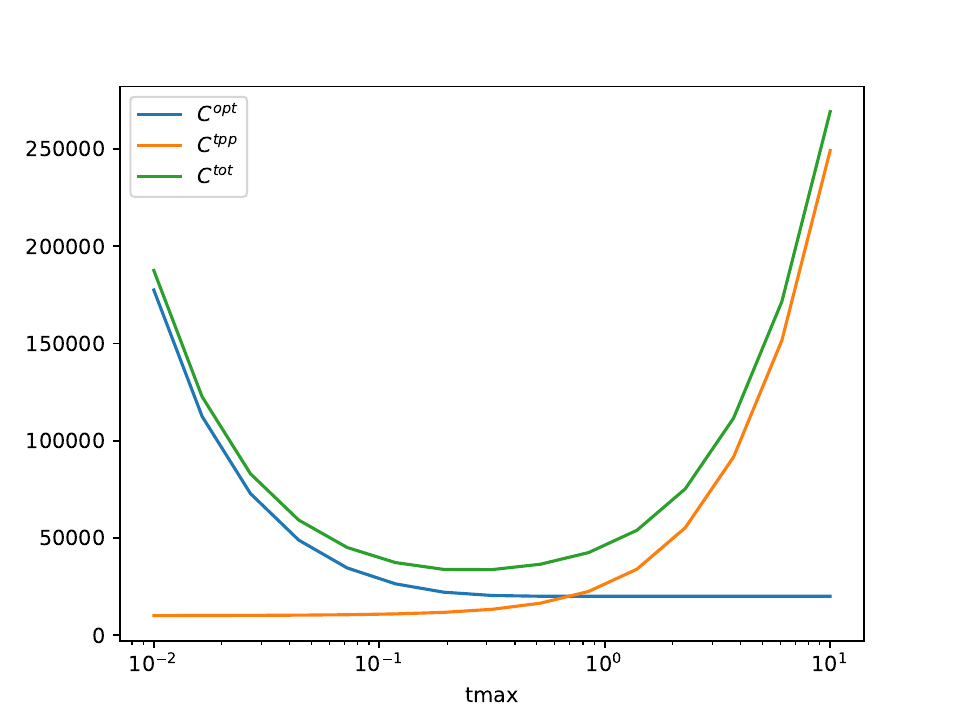}
    \caption{Effect of $\tmax$ (in log-scale) on the number of gradient evaluations with the non-adaptive algorithm}
    \label{fig:tmax:effect:non_adaptive} 
\end{figure}

\subsection{The problem of optimization}

The computation of the upper bound $\Lambda$ is done using Brent's algorithm. However, in the cases where the target is multimodal, the resulting function $\lambda(t)$ may be non concave. In those cases the optimization part of the simulation may lead to a false upper bound. This results in a bias in the simulation of the events: as long as the upper bound $\Lambda$ is a true upper bound of $\lambda$, the procedure is correct. However, if $\Lambda$ is not an upper bound, the simulated process has not $\lambda$ as intensity. This issue was never addressed in the paper of \textcite{corbellaAutomaticZigZagSampling2022}, supposing that on a small scale (typically of length $\tmax$), the function is concave. However, one may be interested to simulate PDMPs on more general cases, where the function is not concave, for instance is the targets admits several modes close to each other. Replacing the Brent's algorithm by another optimization algorithm, e.g. using the gradient of $\lambda$ may not solve the issue as the result may be stuck in a local maximum depending on the starting point.

\subsection{Why $\tmax$ is constant ?}
\label{sec:adaptive_tmax}
The hyperparameter $\tmax$, as it appears in the algorithm, is purely numerical. This was also noticed by \textcite{suttonConcaveConvexPDMPbasedSampling2023}.  It plays no role in the dynamics of the process, contrary to the standard deviation in a random walk Metropolis--Hastings algorithm for instance. It is only used to compute the upper bound $\Lambda$. Changing it each time the upper bound is recomputed will not change a thing, as soon as the optimization is successful. 

Following this observation, we propose to let $\tmax$ change over the execution of the algorithm. To do so, we need to understand where the algorithm gets stuck in the two extreme cases we have presented.

First, if $\tmax$ is very small, the process will often reach the horizon without any event. This mean that the condition at line \ref{alg:if:horizon:reached} of Algorithm \ref{alg:automatic_zigzag} will be often true while never entering the inner while loop at line \ref{alg:line:while:inner}. In this case, we propose to increase $\tmax$ by multiplying it by a constant $\alpha > 1$. 

In the second case, $\tmax$ is very large, and all the budget is wasted in the thinning algorithm due to the low acceptance probability. The algorithm will rarely verify the condition of the if statement at line \ref{alg:line:if:accept} and will mainly enter the else statement at line \ref{alg:line:if:reject}. Thus, in the else statement line \ref{alg:line:if:reject}, we propose to decrease $\tmax$ by dividing it by $\alpha$.

In both case, $\tmax$ is updated with a geometric rule. Therefore, even if the initial value of $\tmax$ is far from the optimal value, $\tmax$ will quickly adapt to the target distribution. One could use a different rule to update $\tmax$, for instance by making the adaptation decreasing over time. We found no need for this in our experiments. At some points, the two regimes will balance out and $\tmax$ will oscillate around the optimal value.

To illustrate the benefits of the adaptation, we redo the experiment of Figure \ref{fig:tmax:effect:non_adaptive} while letting $\tmax$ change over time, targeting again a 30-dimensional standard Gaussian distribution using a Zig-Zag sampler. We choose $\alpha = 1.1$. The results are presented in Figure \ref{fig:tmax:effect:adaptive}. We can see that the number of gradient evaluations is almost constant whatever the initial value of $\tmax$ is. Note that the range of $\tmax$ is also much larger than in the non-adaptive case.

Other possibilities to adapt $\tmax$ could be to use a quantile of the time between events, as proposed by \textcite{suttonConcaveConvexPDMPbasedSampling2023}. We will not investigate this possibility here.

\begin{figure}[h]
    \centering
    \includegraphics[width=0.7\textwidth]{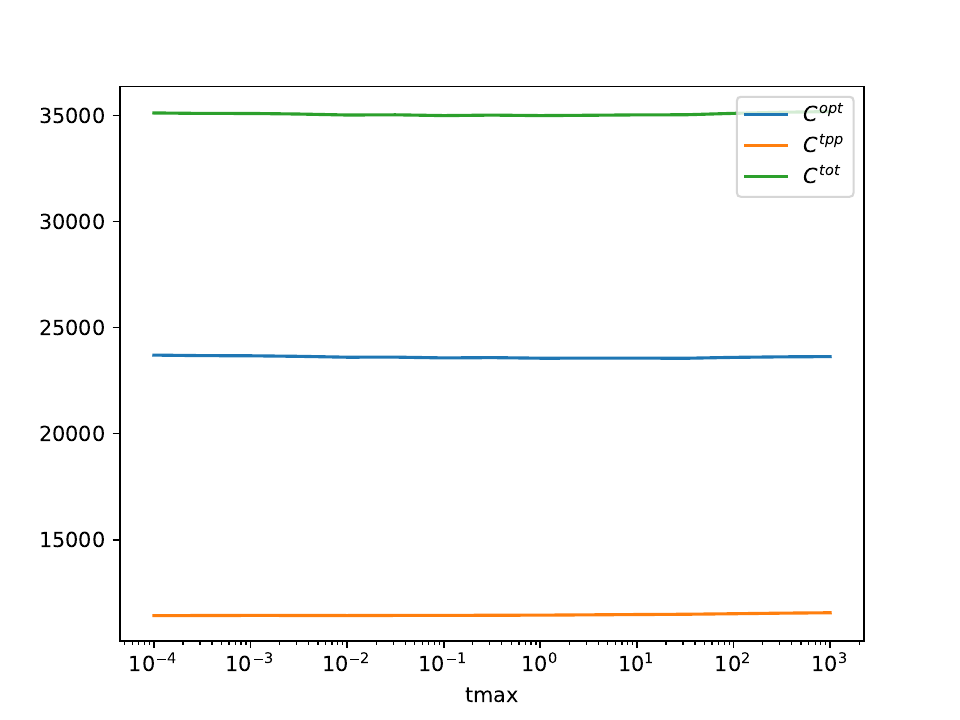}
    \caption{Effect of $\tmax$ (in log-scale) on the number of gradient evaluations with the adaptive algorithm \label{fig:tmax:effect:adaptive}}
\end{figure}

\section{New general method: using a grid}
\label{sec:general_method}
\subsection{New computation of the upper bound}
The thinning theorem is valid for a constant upper bound, which makes the computation of the Poisson point process very easy, but it is not necessary. The method we propose in this paper is to replace a constant upper bound by a piecewise constant one. Having a piecewise constant function makes the equation  \eqref{eq:time_of_next_event} still very easy to solve. Typically, the function will be constant over a grid.  This methods presents two main advantages. 

First, this method outputs a tighter upper bound that using a constant. Thus, using it in the thinning algorithm will lead to a much higher acceptance probability. 

Second, this computation is much robust to the regularity of the function. If the grid is fine enough (compared to the regularity of the function), the upper bound will be correct, even if the function is not concave, resolving an issue of the original algorithm.

Leveraging the possibilities offered by the python library JAX \parencite{jax2018github}, and in particular the vectorization, we can evaluate the rate function $\lambda$ on a grid very effectively, compared to an optimization algorithm where the function is evaluated one at a time. Therefore, even if the total number of evaluations may be larger that doing a classic optimization step, the computation time may be shorter due to the vectorization. Also, compared to a classic optimization algorithm, where only the last evaluation is used, we use all the evaluations of the rate function to construct the bound. We also leverage the automatic differentiation in JAX, not only to compute the gradient of the potential, but also to compute the derivative of the rate function.
\subsection{Construction of the upper bound}

\label{sec:general_definition_upper_bound}
We now present the construction of the piecewise constant upper bound for a given function $\lambda$ that is suppose to be differentiable. As for the automatic Zig-Zag algorithm, we compute an upper bound on an interval $[0,\tmax]$, for a given point $(x,v)$. We introduce a new hyperparameter $N$ that is the number of segments in the grid. By default, we divide the interval $[0,\tmax]$ in $N$ equal segments $[t_i, t_{i+1}] :=[\frac{i}{N} \tmax, \frac{i+1}{N}\tmax]$ for $i=0,\dots,N-1$. Then, we construct an upper bound of $\lambda$ on each segment, as follows. 

On a given segment $[t_i, t_{i+1}]$ we have at our disposal four pieces of information: $\lambda(t_i;x,v)$, $\lambda(t_{i+1};x,v)$ and their time derivatives $\lambda'(t_i;x,v)$ and $\lambda'(t_{i+1};x,v)$, that we will denote respectively $y_i$, $y_{i+1}$, $d_i$ and $d_{i+1}$ for simplicity. The upper bound on the segment, denoted $\Lambda_i$ is taken to be the maximum of the three values :  $y_i$, $y_{i+1}$, and the y-axis of the intersection of the two tangents at $t_i$ and $t_{i+1}$, called $m_i$. 
The first two values are most relevant where the function is monotonic and the last one $m_i$, depending on the derivatives, is used to capture the case where the function reaches a local maximum on the segment. This is illustrated in
Figure \ref{fig:upper_bound_construction} where the function is the sine function, neither concave nor convex.

More precisely, the intersection point $(x_i,m_i)$ is solution to the system of equations :
\begin{equation}
    \begin{cases}
        m_i = d_i x_i + y_i - d_i t_i \\
        m_i = d_{i+1} x_i + y_{i+1} - d_{i+1} t_{i+1}
    \end{cases}
\end{equation}

This gives $x_i = \frac{y_{i+1} - y_i + d_i t_i - d_{i+1} t_{i+1}}{d_i - d_{i+1}}$ and $m_i = d_i x_i + y_i - d_i t_i$. A simplification is made: if $d_i = d_{i+1}$, $x_i = t_i$ and $m_i = y_i$. If $x_i$ does not lie in the interval, we clip it to the bounds.
The upper bound on the segment is then $\Lambda_i = \max(y_i, y_{i+1}, m_i)$.

Finally, on the whole interval $[0,\tmax]$, the upper bound $\Lambda$ can be expressed as $\Lambda(t) = \sum_{i=0}^{N-1} \Lambda_i \indic{[t_i, t_{i+1})}(t)$. 
This procedure is summarized in Algorithm \ref{alg:upper_bound}. Given a function $\lambda$, an horizon $\tmax$ and a number of segments $N$, we denote $\UpperBound(\lambda,\tmax,N)$ the function that returns the upper bound $\Lambda$ using Algorithm \ref{alg:upper_bound} and the grid $(t_i)_{i=0}^{N} = (\frac{i \tmax }{N} )_{i=0}^{N}$. 

\begin{algorithm}
    \label{alg:upper_bound}
    \caption{$\UpperBound$: construction of the upper bound}
    \DontPrintSemicolon
    \SetKw{And}{and}
    \SetKwComment{Comment}{\# }{}
    \SetCommentSty{itshape}
    \KwIn{the function $\lambda : \Rset \to \Rset$ to upper bound, an horizon $\tmax$, a number of segments $N$}
    \KwOut{the upper bound $\Lambda$ of $\lambda$}

    \For{$i=0$ \KwTo $N-1$ }{ 
        $t_i = \frac{i}{N} \tmax$ \;
        $y_i = \lambda(t_i)$ \Comment*{evaluate the function and its derivative on the grid}
        $d_i = \lambda'(t_i)$ \;
    }
    \For{$i=0$ \KwTo $N-1$}{

        \If{$d_i = d_{i+1}$}{
            $x_i = t_i$ \;
            $m_i = y_i$ \;
        }
        \Else{
            $x_i = \frac{y_{i+1} - y_i + d_i t_i - d_{i+1} t_{i+1}}{d_i - d_{i+1}}$ \Comment*{abscissa of the intersection of the two tangents}
            $x_i = \min(t_{i+1}, \max(t_i, x_i))$  \Comment*{clip $x_i$ to the bounds}
            $m_i = d_i x_i + y_i - d_i t_i$ 
        }
        $\Lambda_i = \max(y_i, y_{i+1}, m_i)$ \;
    }
    $\Lambda(t) = \sum_{i=0}^{N-1} \Lambda_i \indic{[t_i, t_{i+1})}(t)$ \;
    \Return{$\UpperBound(\lambda, \tmax, N) := \Lambda$}

\end{algorithm}

\begin{figure}[htbp]
    \centering
    \includegraphics[width=0.8\textwidth]{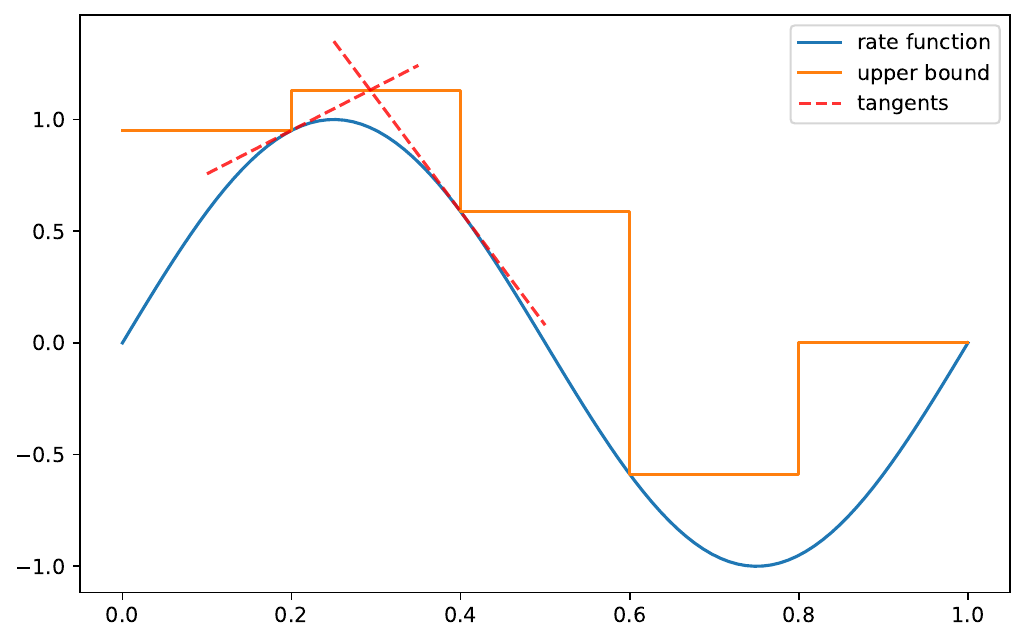}
    \caption{Construction of the upper bound for the sine function \label{fig:upper_bound_construction}}
\end{figure}

\subsection{Correctness of the upper bound}

We will now show under which conditions the upper bound $\Lambda$ constructed in Algorithm \ref{alg:upper_bound} is indeed an upper bound of $\lambda$. Define 
$$Z =\{t \in (0,\tmax) : (\lambda'(t) = 0 \text { and } \lambda''(t) < 0)\text{ or } \lambda''(t)=0\},$$
the set containing the critical points of $\lambda$  that are (local) maxima, henceforth referred to as maximum critical points (they may also be a zero of the second derivative),  and the zeros of its second derivative that are not maximum critical points. We also define the minimum distance between two points of $Z$ as 
$$\delta = \min_{t,t' \in Z, t \neq t'} |t-t'|.$$

\begin{remark}
    If the set $Z$ is infinite, typically if the first or second derivative cancel out on an interval, we can replace the set $Z$ by its quotient by the equivalence relation $t \sim t'$ if $t$ and $t'$ are in the same connected component of $Z$. The condition $t \neq t'$ in the definition of $\delta$ now means that $t$ and $t'$ are not connected. 
\end{remark}

\begin{remark}
    The intuition behind the definition of $Z$ is that, in order to have an upper bound, a local maximum and a change of concavity cannot happen on the same subinterval. This ensures that around a local maximum the function is concave and thus under its tangents.  If it not the case, one can easily construct a smooth function for which the function constructed in Algorithm \ref{alg:upper_bound} is not an upper bound due to the inflection point. 
\end{remark}

Now that the set $Z$ as well as the minimum distance between its points $\delta$ are defined, we can state the main theorem for the correctness of the upper bound.

\begin{theorem}
    Let $\lambda : [0,\tmax] \to \Rset$ be a twice continuously
     differentiable function such that $\delta>0$.
    Suppose that $\max_{i=0,\dots,N-1} |t_{i+1}- t_i| < \delta$. 
    
    Then, the function $\Lambda$ constructed in Algorithm \ref{alg:upper_bound} is an upper bound of $\lambda$, i.e. for all $t \in [0,\tmax]$, $\Lambda(t) \geq \lambda(t)$. 
\end{theorem}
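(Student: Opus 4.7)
The plan is to prove the bound one grid segment at a time. Since $\Lambda(t) = \sum_{i=0}^{N-1} \Lambda_i \indic{[t_i, t_{i+1})}(t)$, it suffices to show $\lambda(t) \leq \Lambda_i$ for every $t \in [t_i, t_{i+1}]$. The hypothesis $\max_i |t_{i+1} - t_i| < \delta$ guarantees that each closed segment contains at most one element of $Z$, since any two distinct points of $Z$ are at distance at least $\delta$. I then split into cases on a fixed segment according to whether $\lambda$ admits a local maximum on it.

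If $\lambda$ has no local maximum on $[t_i, t_{i+1}]$, then a continuous function with no interior local maximum attains its supremum on a closed interval at an endpoint, so $\sup_{[t_i, t_{i+1}]} \lambda \leq \max(y_i, y_{i+1}) \leq \Lambda_i$. This case covers both segments with no $Z$-point and segments whose unique $Z$-point is merely a zero of $\lambda''$ that is not itself a maximum of $\lambda$: any additional interior local maximum $s$ would have $\lambda'(s)=0$ and thus belong to $Z$ via its first clause when $\lambda''(s)<0$, or via its second when $\lambda''(s)=0$, contradicting uniqueness.

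If $\lambda$ does have a local maximum $t^* \in [t_i, t_{i+1}]$ (which lies in $Z$ when interior), then the second-derivative test plus uniqueness of the $Z$-point yield $\lambda''(t^*) \leq 0$ and $\lambda''$ nonzero everywhere else on the segment; continuity of $\lambda''$ together with the intermediate value theorem then forces $\lambda'' \leq 0$ throughout, so $\lambda$ is concave on $[t_i, t_{i+1}]$. Concavity gives $\lambda \leq L_i$ and $\lambda \leq L_{i+1}$ for the two tangent lines at the endpoints, hence $\lambda \leq M := \min(L_i, L_{i+1})$. The mean value inequality $d_{i+1} \leq (y_{i+1}-y_i)/(t_{i+1}-t_i) \leq d_i$, itself a consequence of concavity, implies that the unclipped tangent intersection $x_i$ lies inside the segment. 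The concave piecewise linear function $M$ attains its maximum either at $x_i$, giving $m_i$, or at an endpoint, where concavity of $\lambda$ forces $M(t_i) = y_i$ and $M(t_{i+1}) = y_{i+1}$; in every case $\lambda(t) \leq \max(y_i, y_{i+1}, m_i) = \Lambda_i$.

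The main technical obstacle I expect is the derivation of global concavity on the segment in the second case: the set $Z$ is engineered precisely so that a local maximum and a change of concavity cannot coexist on a single subinterval, but this has to be extracted by separately considering both clauses of its definition and combining them with continuity of $\lambda''$. Once concavity is secured, the bound reduces to elementary planar geometry of two tangent lines; the clipping step in Algorithm \ref{alg:upper_bound} only ever lowers $m_i$ to an endpoint value, which is harmless because $\Lambda_i \geq \max(y_i, y_{i+1})$ regardless.
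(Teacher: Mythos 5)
Your proof is correct and follows essentially the same route as the paper's: reduce to a single grid segment, use the hypothesis $\max_i|t_{i+1}-t_i|<\delta$ to conclude there is at most one point of $Z$ per open subinterval, treat the no-interior-maximum case via endpoint values, and in the one-maximum case derive concavity of $\lambda$ on the subinterval from the sign rigidity of $\lambda''$ and then dominate $\lambda$ by the two tangent lines. The only stylistic difference is that you make the final bound explicit through the piecewise-linear envelope $M=\min(L_i,L_{i+1})$, the mean-value inequality $d_{i+1}\le(y_{i+1}-y_i)/(t_{i+1}-t_i)\le d_i$, and a remark on the clipping of $x_i$, where the paper argues more tersely that the intersection ordinate exceeds $\lambda(s)$; these are elaborations, not a different argument.

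One small point worth tightening: when you invoke the intermediate value theorem to get $\lambda''\le 0$ throughout the segment, the case $\lambda''(t^*)=0$ is not immediate from IVT alone, since $\lambda''$ could in principle be nonzero with opposite signs on the two sides of $t^*$ without contradicting continuity. You need to also use that $t^*$ is a local maximum (so $\lambda'\ge 0$ to its left and $\lambda'\le 0$ to its right), which forces the constant sign of $\lambda''$ on each side to be negative. The paper has the same compression in its phrase about $\lambda''$ not changing sign, so this is a shared informal step rather than a flaw unique to your write-up.
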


\begin{proof}
    Let $t \in [0,\tmax]$. We will show that $\Lambda(t) \geq \lambda(t)$. Let $i$ be such that $t \in [t_i, t_{i+1})$. By construction of $Z$ and the grid, we have that $(t_i, t_{i+1})$ contains at most one point of $Z$. Thus we have two cases to consider: the interval $(t_i, t_{i+1})$ contains no maximum critical point of $\lambda$, or it contains one.

    If $(t_i, t_{i+1})$ contains no maximum critical point of $\lambda$, then, because $\lambda$ is differentiable on $(t_i, t_{i+1}]$, it reaches its maximum at the bounds of the interval. Therefore, $\lambda(t) \leq \max(y_i, y_{i+1}) \leq \Lambda(t)$.

    Second case, $(t_i, t_{i+1})$ contains a unique maximum critical point $s$ of $\lambda$. By definition of $Z$, $\lambda'' \neq 0$ on $(t_i, t_{i+1}) \backslash \{s\}$ and as $\lambda ''$ is continuous, this implies that $\lambda''$ does not change sign on $(t_i, t_{i+1})$ and thus that $\lambda'$ is monotonic on $(t_i,s]$ and $[s,t_{i+1})$. As $s$ is a local maximum, $\lambda'$ is decreasing on a neighborhood of $s$. Combining this with the previous argument, we have that $\lambda'$ is decreasing on $(t_i, t_{i+1})$ and thus $\lambda$ is concave on $(t_i, t_{i+1})$. The graph of $\lambda$ is then under its tangents. As the two tangents at $t_i$ and $t_{i+1}$ respectively have slopes of different signs, the ordinate of the intersection of the two tangents is larger than $\lambda(s)$ the local maximum. Thus, $\Lambda(t) \geq \lambda(t)$.
  
\end{proof}

\begin{remark}
    In the specific case of PDMPs where the potential is convex and twice  differentiable, the rate function is non decreasing and algorithm \ref{alg:upper_bound} will therefore always output a correct upper bound, even for $N=2$. Indeed , if the rate function is like $\lambda(t) = \sprod{\nabla U(x + tv )}{v}_+ + \underlambda$ (for the Bouncy Particle or the Forward Event Chain), then the derivative of $\sprod{\nabla U(x + tv )}{v}$ is $\sprod{\nabla^2 U(x + tv )v}{v}$, which is positive as the Hessian is positive semi-definite. 

    If the rate function is like $\lambda(t) = \sum_{i=1}^d (\nabla_i U(x + tv ) v_i)_+$ (for the Zig-Zag sampler), then the derivative of $\nabla_i U(x + tv ) v_i$ is $\sprod{\nabla^2 U(x + tv )v}{e_i}v_i = \sum_j \frac{\partial^2 U}{\partial x_i x_j}(x + tv)v_j v_i$. Thus, for a Zig-Zag velocity and if the Hessian is diagonally dominant, $\sum_j \frac{\partial^2 U}{\partial x_i x_j}(x + tv)v_j v_i > 0$ and the rate function is increasing. This condition on the Hessian is not just a technical condition, and it was discussed in the Gaussian case in \textcite{10.1214/18-AAP1453}.
\end{remark}

\subsection{Use of the upper bound in the case of PDMPs sampling}

In the case of PDMP sampling, the rate function of the vast majority of processes is either the rate of the Zig-Zag Sampler $\lambdazz(x,v) = \sum_{i=1}^d (\nabla_i U(x) v_i)_+$ or the rate of the Bouncy Particle Sampler $\lambdabps(x,v) = \sprod{\nabla U(x)}{v}_+  + \underlambda$. The easy solution would be to use the algorithm \ref{alg:upper_bound} directly on $\lambdazz$ or $\lambdabps$. However, the positive part in the rate function makes the function non differentiable even if the potential $U$ is smooth, leading to errors in the bound. We thus propose two solutions to overcome this issue depending on the rate function.

\subsubsection{Case of a Bouncy Particle-like rate}
\label{sec:bps_rate}
If the process has a rate function of the form $\lambdabps(t) = \sprod{\nabla U(x_t)}{v_t}_+  + \underlambda $, we can use the algorithm \ref{alg:upper_bound} directly on $\sprod{\nabla U(x_t)}{v_t}$ instead of $\lambdabps$.   The function $t \mapsto \sprod{\nabla U(x_t)}{v_t}$ is smooth and thus the bound has more chances to be correct. 
If $\tilde \Lambda$ is an upper bound for $\sprod{\nabla U(x_t)}{v_t}$ on $[0,\tmax]$, then $(\tilde \Lambda)_+(t) + \underlambda$ is an upper bound for $\lambdabps(t)$ on $[0,\tmax]$. We call this method the \textit{signed strategy}.

The main benefit of this approach is its ability to retain the gradient information even when $\lambdabps$ is negative, rather than having a zero gradient due to the positive part. For instance, if the true rate function is concave and  positive on a interval smaller than the grid, computing the upper bound of $\lambdabps$ will fail, while computing the upper bound of $\sprod{\nabla U(x_t)}{v_t}$ will be correct.

\subsubsection{Case of a Zig-Zag-like rate}
\label{sec:zigzag_rate}

If the function has a rate function of the form $\lambdazz(t) = \sum_{j=1}^d (\nabla_j U(x_t) v_{t,j})_+$, the problem is more complex. While in the previous case, the rate had only one point of non differentiability, always where the rate function equals to zero, the rate function of the Zig-Zag sampler has non differentiability points that can be at any positive value. Computing the upper bound of $\sprod{\nabla U(x_t)}{v_t}$ will not work in this case because some terms of $\sum_{j=1}^{n} \nabla_j U(x_t) v_{t,j}$ may be negative and thus cancel each other. 

However, we can leverage the vectorized form of the rate function to compute the upper bound of each term separately. This defines the \textit{vectorized strategy}, which again we can subdivide in two cases : \textit{signed} and \textit{not signed}. 

In the \textit{vectorized and not signed strategy}, we compute the upper bound of each term of $\lambdazz$ separately, i.e. we compute an upper bound for $t \mapsto (\nabla_j U(x_t) v_{t,j})_+$, for $j=1,\dots,d$. Then we take the sum of those upper bounds as the upper bound of $\lambdazz$. 

In the \textit{vectorized signed strategy}, we compute the upper bound of each term separately and without taking their positive parts, i.e. we compute an upper bound $\Lambda_j$ for $t \mapsto \nabla_j U(x_t) v_{t,j}$, for $j=1,\dots,d$. Then, we take the sum of the positive parts of the upper bounds: $\sum (\Lambda_j)_+ \geq \lambda$ This method is the most robust to the non differentiability of the rate function, but may lead to a slightly looser upper bound.

\subsubsection{Computing the next event}
Once the upper bound $\Lambda$ is computed, the time of the next event can be simulated by integrating $\Lambda$. Given $e \sim \expd(1) $, because the upper bound is piecewise constant, finding $\tau$ such that $\int_0^\tau \Lambda(t)\dif t = e$ can be solved easily. We denote $i$ the index such that $\int_0^{t_i} \Lambda(t)\dif t \leq u < \int_0^{t_{i+1}} \Lambda(t)\dif t$. Then, $\tau = t_i + \frac{e - \int_0^{t_i} \Lambda(t)\dif t}{\Lambda(t_i)} $. The algorithm is presented in Algorithm \ref{alg:nextevent}.
\begin{algorithm}
    \label{alg:nextevent}
    \caption{$\NextEvent$ : computation of the time of the next event}
    \DontPrintSemicolon
    \SetKw{And}{and}
    \SetKwComment{Comment}{\# }{}
    \SetCommentSty{itshape}
    \KwIn{Upper bound $\Lambda$, number of segments $N$, horizon $\tmax$, a time $e$}
    \KwOut{Time of the next event $\tau$}
    $i = 0$ \;
    \While{$\int_0^{t_i} \Lambda(t)\dif t \leq e$ \And $i < N$}{
        $i = i+1$ \;
    }
    \If{$i = N$}{
        $\tau = \tmax + 1$ \Comment*{the horizon is reached}
    }
    \Else{
    $\tau = t_i + \frac{u - \int_0^{t_i} \Lambda(t)\dif t}{\Lambda(t_i)}$ \;
    }
    \Return{$\NextEvent(\Lambda, \tmax,N,e) := \tau$}
\end{algorithm}

\subsection{Adaptation of tmax in this case}
The adaptation strategy of $\tmax$ presented in Section \ref{sec:adaptive_tmax} can be used in this case. It is slightly modified to restrain the value of $\tmax$, as having a too large value of $\tmax$ may lead to errors in the upper bound. Instead of multiplying and dividing $\tmax$ by the same constant $\alpha$, we propose to multiply $\tmax$ by $\alpha_+$ if the horizon is reached without any event and to divide by $\alpha_-$ if an event is rejected.

With this strategy, we noticed in the experiments that the ratio $\frac{\alpha_+ - 1}{\alpha_- -1}$ is connected to the ratio between the number of time the horizon is reached and the number of events that are rejected. If $\alpha_+ = \alpha_-$, the horizon is reached approximately as often as the events are rejected. If the ratio is $1/4$ (for instance we decrease $\tmax$ by $4 \%$ and increase it by $1 \%$), the horizon is reached approximately four times more often than the events are rejected.

\subsection{Algorithm for the PDMP}
The algorithm \ref{alg:automatic_zigzag} adapted to use the upper bound $\Lambda$ constructed in Algorithm \ref{alg:upper_bound} is presented in Algorithm \ref{alg:automatic_pdmp}. It is extended to the general case of PDMPs, defined by the triplet integrator, intensity rate, jump kernel $(\phi_t,\lambda,Q)$. It includes the adaptation of $\tmax$ as presented in the previous section, and parametrized by $\alpha_+$ and $\alpha_-$.
\begin{algorithm}
    \label{alg:automatic_pdmp}
    \DontPrintSemicolon
    \SetKw{And}{and}
    \SetKwComment{Comment}{\# }{}
    \SetCommentSty{itshape}
    \caption{PDMP sampler with automatic tuning of $\tmax$ and piecewise constant upper bound}
    \KwIn{Initial location $x_0$, initial velocity $v_0$, number of skeleton points $K$, rate functions $\lambda $,initial horizon $\tmax$, number of segments for the grid $N$, integrator $\phi_t$, velocity jump kernel $Q$, adaptation parameters $\alpha_+$ and $\alpha_-$}
    \KwOut{Time, location and velocity of the skeleton points}
    $t_0 = 0$, $k =1$ \Comment*{set starting time and skeleton count} 
    $t= t_0$, $x = x_0$, $v = v_0$ \Comment*{set current state of the process} 
    $\Lambda  = \UpperBound(\lambda(\noarg ;x,v), \tmax,N) $ \Comment*{compute upper bound using Algorithm \ref{alg:upper_bound}}
    $ e \sim \expd(1)$ \;
    $\tau^* = \NextEvent(\Lambda,\tmax,N,e) $ \Comment*{propose switching time using Algorithm \ref{alg:nextevent}}
    $\tau^{opt} = \tau^*$ \Comment*{track time from last optimization}
    \While{$k \leq K$}{ 
        $u = 0$ \Comment*{set acceptance to 0 until next proposal}
        \While{$\tauopt \leq \tmax$ \And $u=0$}{
            $u \sim \ber(\lambda(\tauopt)/\Lambda(\tauopt))$ \Comment*{accept or reject the proposal}
            \If{$u=1$}{
                $t = t + \tauopt$  \Comment*{progress time}
                $x,v  = \phi_{\tauopt}(x,v)$ \Comment*{progress location and velocity}
                $v \sim Q(x,v, \noarg) $ \Comment*{velocity jump}
                $t_k = t$, $x_k = x$, $v_k = v$ \Comment*{save skeleton point}
                $k = k+1$ \Comment*{increment skeleton count}
                $\Lambda = \UpperBound(\lambda(\noarg;x,v), \tmax, N) $ \Comment*{compute new upper bound}
                $e \sim \expd(1)$ \;
                $\tau^* = \NextEvent(\Lambda, \tmax, N,e)$ \Comment*{propose new switching time}
                $\tauopt = \tau^*$ \Comment*{reset time from last optimization}
            }
            \Else{
                $e' \sim \expd(1)$ \;
                $e = e + e'$ \;
                $\tmax = \alpha_- \cdot \tmax$ \Comment*{decrease horizon}    
                $\tauopt = \NextEvent(\Lambda,\tmax, N, e)$ \Comment*{compute new switching proposal}
            }
        }
        \If(\Comment*[f]{if the horizon is reached with no flip}){$\tauopt > \tmax$ \And $u=0$}{ 
            $t = t + \tmax$ \Comment*{progress time to the horizon}
            $x,v = \phi_{\tmax}(x,v)$  \Comment*{progress location until the horizon}
            $\tmax = \alpha_+ \cdot  \tmax$ \Comment*{increase horizon}
            $\Lambda = \UpperBound(\lambda(\noarg;x,v), \tmax, N) $ \Comment*{compute new upper bound}
            $e \sim \expd(1)$ \;
            $\tau^* = \NextEvent(\Lambda, \tmax,N,e)$ \Comment*{propose new switching time}
            $\tauopt = \tau^*$ \Comment*{reset time from last optimization}
        }
    }
    \Return{$ \left\{t_k,x_k,v_k\right\}_{k=1}^K$}
\end{algorithm}

\subsection{Dealing with errors in the upper bound}

The upper bound constructed in Algorithm \ref{alg:upper_bound} is not always an upper bound of the rate function. This can occurs when the function is not differentiable, or when the grid is not fine enough. In this case, the algorithm will not be exact. The only way to spot this issue is the algorithm is to check if the acceptance rate for the thinning algorithm is larger than $1$. This is a sign that the upper bound is wrong However, having no event is not a sign that the upper bound is correct, even if it provides a feedback on the quality of the bound.

In practice, we deal with the issue the same way as in the implementation of the automatic Zig-Zag \parencite{corbellaAutomaticZigZagSampling2022}. When computing the ratio $\lambda(\tauopt)/\Lambda(\tauopt)$ in the algorithm, if it is larger than $1$, we redo the computation of the upper bound with an horizon $\tmax$ divided by two.

\section{Numerical experiments}
\label{sec:numerical_experiments}
To illustrate the benefits of this new methods, we will compare it to the automatic Zig-Zag algorithms on two examples. The implementation of the algorithms is done in Python using the JAX library \cite{jax2018github}. It is available on GitHub at \url{https://github.com/charlyandral/pdmp_jax} on PyPI at \url{https://pypi.org/project/pdmp-jax}. All the experiments were done on CPU with a Apple M1 chip.

\subsection{Mixture of two Gaussian distributions with different scales}

For the first example, we consider a mixture of two Gaussian distributions in dimension 2. The target distribution is $\Pi = \frac{1}{2}\mathcal{N}((0,0), I_2) + \frac{1}{2} \mathcal{N}((1,1), \sigma^2 I_2)$, with $\sigma = 0.03$. While the two modes are close to each other, this target can be complicated to sample from because of the second mode that is peaked. The different of scale in the two modes makes difficult to find a good horizon $\tmax$ for the entire space. If the horizon $\tmax$ is too large, the peaked mode can easily be missed, while if it is at the scale of $\sigma$, the process will be slow to move in the first mode. An easy way to diagnose this issue is to look at the mean of the sample obtained. If the second mode is not visited, the means (of each component) will be smaller that the true mean $(0.5,0.5)$. 

For this example, we will use the Bouncy Particle Sampler with a refresh rate of $0.1$. We first compare the effect of the horizon $\tmax$ and of the number of grid points on the mean of the sample. The two adaptation parameters are set to $\alpha_+ = 1.01$ and $\alpha_- = 1.04$. We used the signed strategy from section \ref{sec:bps_rate} to compute the upper bound.

We compare four values of $\tmax$ : $0$, $0.01$, $0.1$ and $1$, where $0$ means that the horizon is adapted and for the three other values, the horizon is fixed with no adaptation. We also consider six values of the number of grid points $N$ : $0$, $5$, $10$, $20$, $50$ and $100$. Here, $0$ means that the upper bound is computed using the Brent algorithm as for the automatic Zig-Zag algorithm. In this specific case, only the default strategy is used (not vectorized not signed).  We run $10$ different processes of $1$M skeleton points for each combination of $\tmax$ and $N$.  The mean of the process is computed analytically from the skeleton points.
 The results are presented in Figure \ref{fig:mean_mixture_gaussian}.

\begin{figure}[h]
    \centering
    \includegraphics[width=0.9\textwidth]{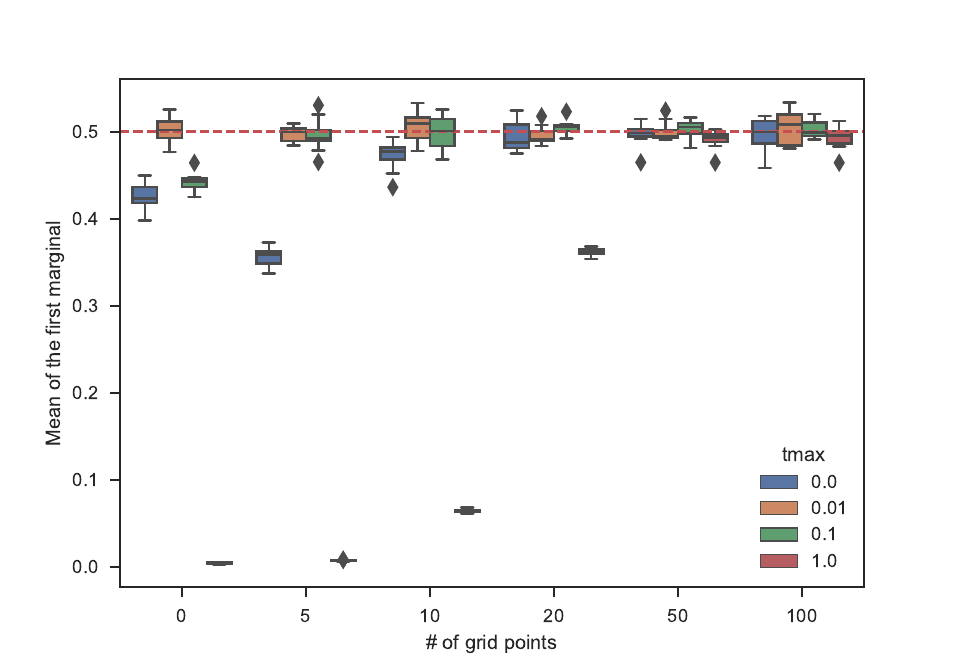}
    \caption{Mean of the sample for the mixture of Gaussian distributions as a function of $\tmax$ and $N$}
    \label{fig:mean_mixture_gaussian}
\end{figure}

We see that for a small grid, or for an optimization-based bound, the mean of the process is far from the true mean if the horizon is not very small. For instance, using the Brent algorithm and $\tmax =1.$ give a mean of almost zeros on the 10 runs, meaning that the narrow mode is missed each time. The adaptation of the horizon fails in the case the produce an unbiased result. However, using a grid with a sufficient number of points (here $>20$), the mean of the samples is around its true value. 

Finally, we look at the computation time of the algorithm for the different values of $\tmax$ and $N$. The results are presented in Figure \ref{fig:time_mixture_gaussian}. One first observation is that the choice of $\tmax=0.01$ regardless of the number of grid points, is much slower than the other choices. This balances the fact that looking only at Figure \ref{fig:mean_mixture_gaussian}, $\tmax=0.01$ was a good choice. This can be easily explained in this case by the fact that when using a small $N$, $\tmax$ must be small to capture the narrow mode. However, this makes the algorithm not efficient while moving in the broad mode. 

The figure also shows that computing the bound using the Brent's algorithm is much slower than using a grid, even for $100$ points.
This relates to the fact that evaluating the rate function on a grid is much faster than using an optimization algorithm where the evaluations are done one at a time. Finally, the adaptation of $\tmax$ is very efficient in this case for a grid as it is one of the fastest method compared to the other choices of $\tmax$. It is worth noting that this is not the case for the Brent's algorithm, where the adaptation is slower than using a fixed $\tmax$. We have not investigated the reason for this behavior.

Overall, choosing $N=50$ (or even $N=20$) and adapting $\tmax$ gives the best results in terms of computation time and accuracy of the sample : it produces an correct sample for a fraction (around $10 \%$) of the time used by the automatic Zig-Zag algorithm with $\tmax = 0.01$.

\begin{figure}[h]
    \centering
    \includegraphics[width=0.9\textwidth]{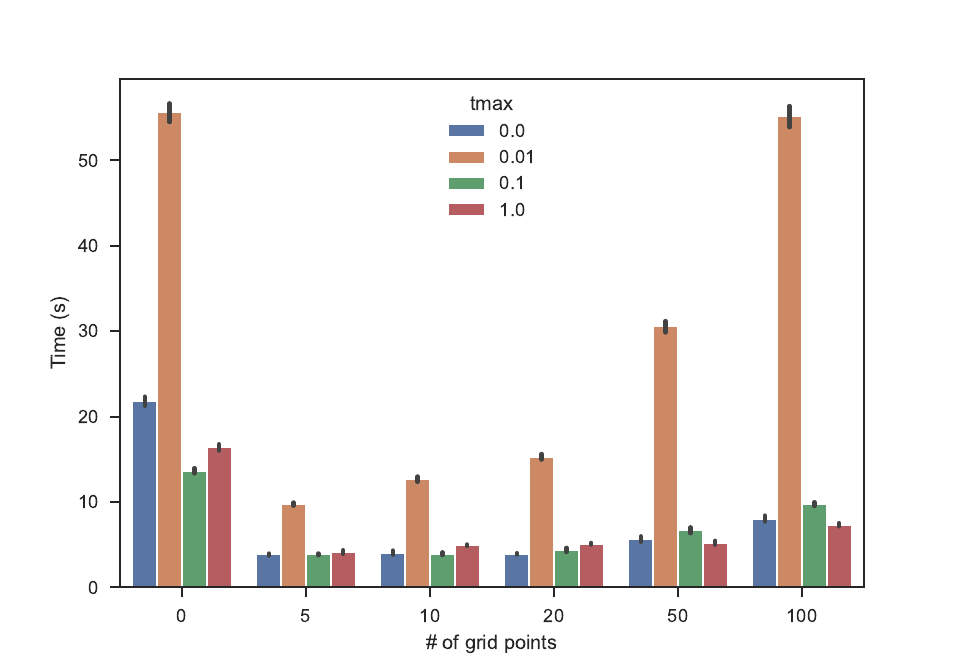}
    \caption{Mean computation time of the skeleton for the mixture of  two Gaussian distributions with different scales as a function of $\tmax$ and $N$}
    \label{fig:time_mixture_gaussian}
\end{figure}

\subsection{Local mixture of 20 Gaussian distributions}

For this second example, we consider a mixture of $20$ Gaussian distributions in dimension $2$. It is constructed as follows: for each $i=1,\dots,20$, we sample a mean $\mu_i$ from a Gaussian in $\Rset^2$ with variance $3^2$. The target distribution is then $\Pi =  \frac{1}{K}\sum_{i=1}^K \mathcal{N}(\mu_i, I_d)$. This constructs a non-convex potential that can be difficult to bound. The sampling of the means is done once at the beginning of the experiment and is kept fixed across the different runs. 

We use this target to evaluate the errors in the upper bound, and to illustrate the effect of the computations tricks described in sections \ref{sec:bps_rate} and \ref{sec:zigzag_rate}

\subsubsection{Results for the Zig-Zag sampler}

We first consider the Zig-Zag sampler. The horizon $\tmax$ is set to be adaptive with $\alpha_+ = 1.01$ and $\alpha_- = 1.04$. We compare the results using different grid sizes $N$ ($5, 10, 20, 50$) and the Brent's algorithm ($N=0$ in the figures by convention). We also compare the strategies of section \ref{sec:zigzag_rate}. There are three strategies:
\begin{itemize}
    \item Computing the upper bound of $\lambdazz$ directly. We refer to this strategy as no vectorized and no signed.
    \item Computing the upper bound on each term of the rate $(\nabla_i U(x_t) v_{t,i})_+$ separately. We called this strategy vectorized but not signed.
    \item Computing the upper bound of $\nabla_i U(x_t) v_{t,i}$ separately, and taking the positive part on the bound. We refer to this strategy as vectorized and signed.
\end{itemize}

To measure the errors, we look at two quantities: the number of times that the acceptance rate of the thinning algorithm is larger than $1$, and its mean value in this case, to which we subtract $1$ to get a value close to $0$. We run $20$ different processes for each combination of $N$ and the strategy, again with $1$M skeleton points each, except for $N=0$ where only the no vectorized and no signed strategy is used. The results are presented in Figure \ref{fig:zigzag_mixture_gaussian}.

\begin{figure}[h]
    \centering
    \includegraphics[width=1.\textwidth]{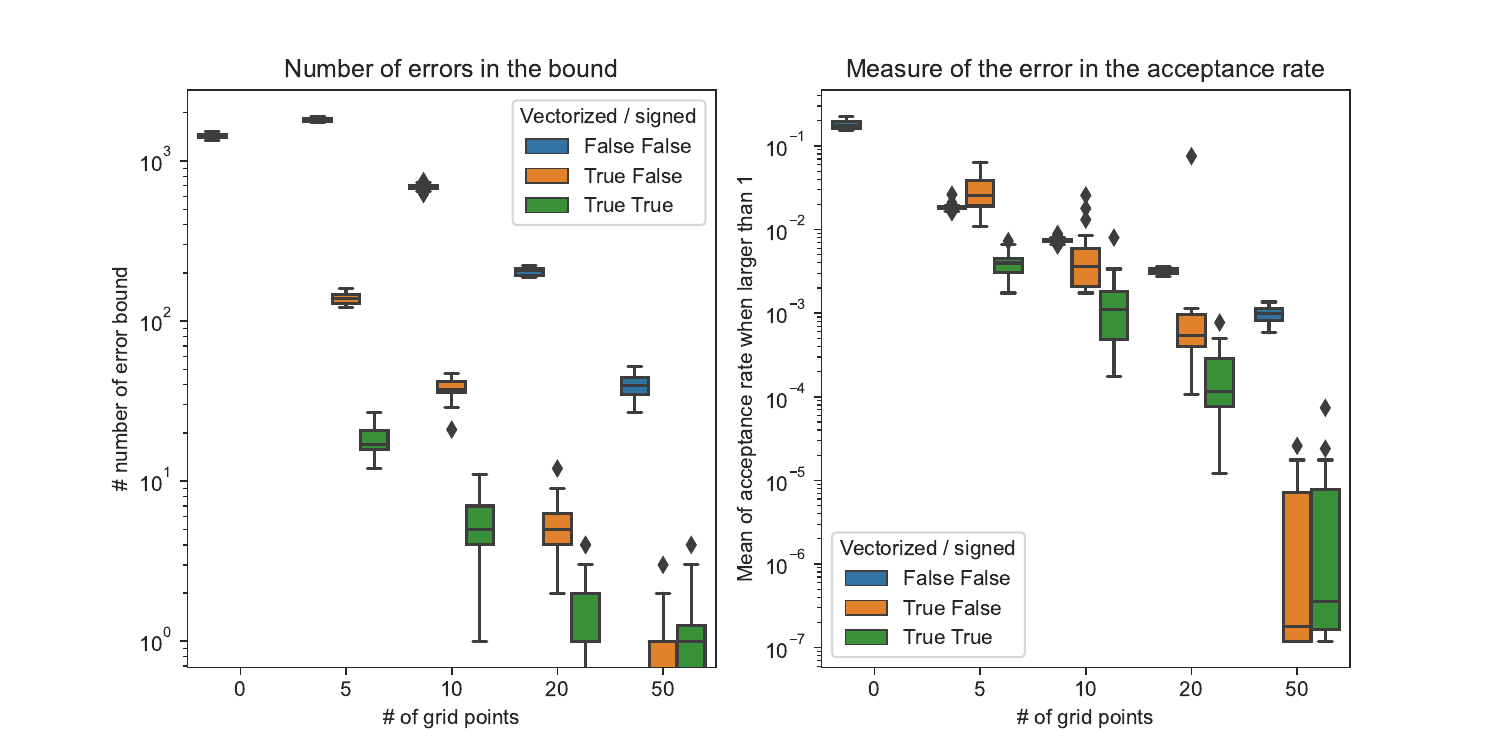}
    \caption{Measure of the errors in the Zig-Zag sampler for the mixture of 20 Gaussian distributions depending on the grid size and the bounding strategy. The left plot shows the number of times the acceptance rate is larger than $1$, in log scale. The right plot shows the mean value of the acceptance rate when it is larger than $1$, minus $1$, in log scale.}
    \label{fig:zigzag_mixture_gaussian}
\end{figure}

The results shows again that using the Brent's algorithm is not efficient in terms of errors. The more grid points, the less errors in the upper bound. The different strategies also have an impact on the errors, as we may expect. Taking the more advanced strategy (vectorized and signed) gives the best results in terms of errors. It is followed by the vectorized but not signed strategy, and finally the no vectorized and no signed strategy. Also, we see that the distance of the error to $1$ (right plot) is also greatly reduced when using the vectorized and signed strategy.Therefore, in addition to having less errors, the errors are also smaller, allowing us to think that the upper bound is more more accurate when it is off, which should lead to a less biased sample.

\subsubsection{Results for the Boomerang sampler}

The same experiment is done for the Boomerang sampler, with a refresh rate of $.1$. Here, the bounding strategies are the ones presented in section \ref{sec:bps_rate}: one bounds $\lambdabps$ directly, one bounds $\sprod{\nabla U(x_t)}{v_t}$. We refer to them as no signed and signed, respectively. 

The results are presented in Figure \ref{fig:boomerang_mixture_gaussian}. The conclusions are the same as for the Zig-Zag sampler. Here, using the signed strategy gives zero errors in the upper bound, even for a small number of grid points. As a result, signed strategy does not appear in the right plot as the acceptance rate is never larger than $1$.

\begin{figure}[h]
    \centering
    \includegraphics[width=1.\textwidth]{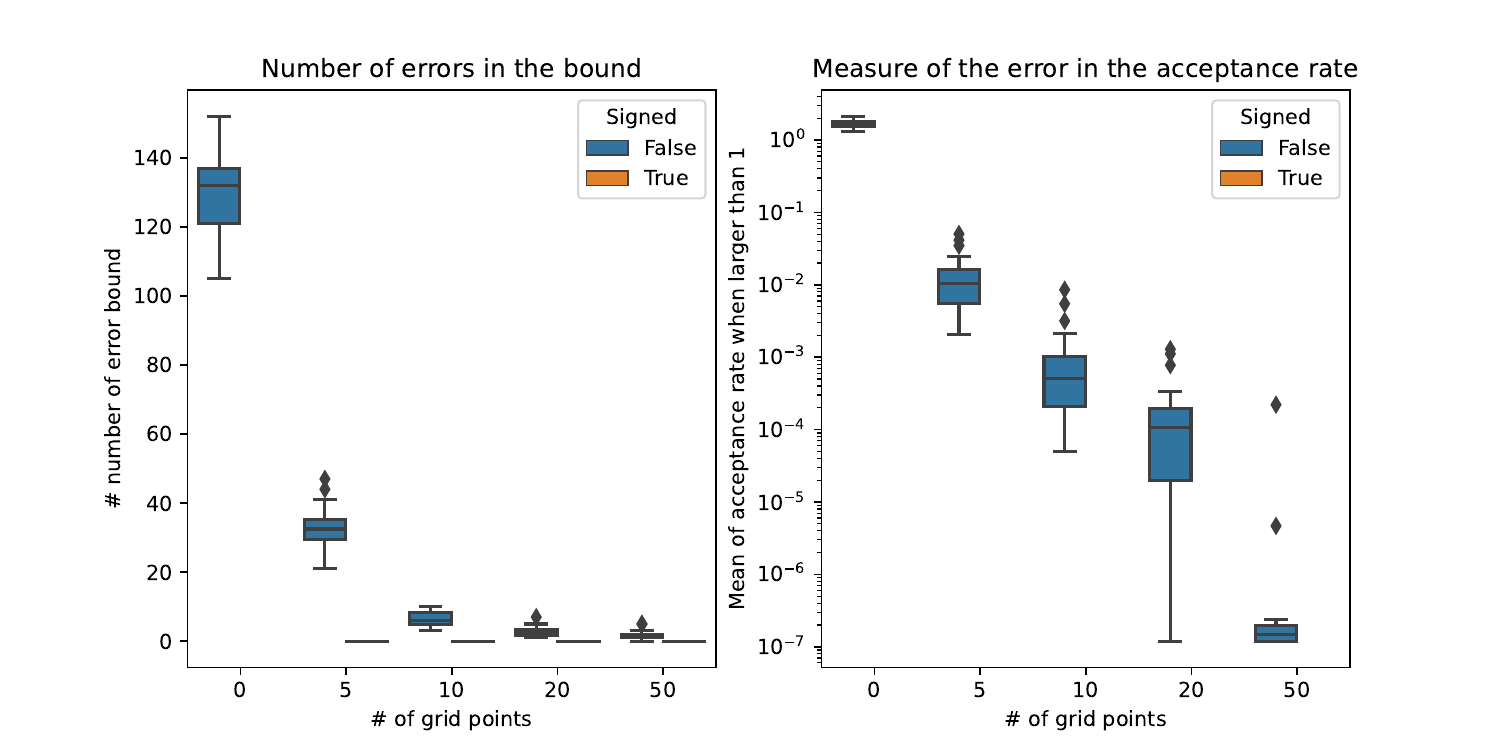}
    \caption{Measure of the errors in the Boomerang sampler for the mixture of Gaussian distributions depending on the grid size and the bounding strategy. The left plot shows the number of times the acceptance rate is larger than $1$, in log scale. The right plot shows the mean value of the acceptance rate when it is larger than $1$, minus $1$, in log scale. }
    \label{fig:boomerang_mixture_gaussian}
\end{figure}

\subsection{High dimensional setting}

Finally, we consider a high dimensional setting. The target is the banana distribution for the first two dimensions, and a Gaussian distribution for the other dimensions. The potential is defined as $U(x) = \frac{1}{2} x_1^2 + (x_2 - x_1^2 + 1)^2 + \sum_{i=3}^d x_i^2$. We run a Forward Event Chain algorithm \parencite{michelForwardEventChainMonte2020} using the Forward Ref version, with the slight modification that the orthogonal rotation is done with probability $0.1$ at each event, instead of the deterministic schedule, no refresh rate $\bar{\lambda} = 0$. The horizon $\tmax$ is again set to be adaptive with $\alpha_+ = 1.01$ and $\alpha_- = 1.04$. The process is run for $1$M skeleton points, with $20$ different runs for each combination of $N \in \{0,3,5,10,20\} $ and the bounding strategy (except for $N=0$ where only the no signed strategy is used). 

As the target is not multimodal, the errors in the upper bound are low : a mean $0.2$ of errors per chain for $N=0$ (Brent's algorithm) and $0.05$ for $N=3$ and the no signed strategy. For all the other cases, there is no error in the upper bound. We present in table \ref{tab:high_dim} several quantities of interest for the different strategies and grid sizes. All of them are the mean of the $20$ runs. We show the running time of the algorithm in seconds, the average value of $\tmax$, the mean acceptance rate for the thinning, the number of rejections and the number of times the horizon is reached without any event, corresponding to line \ref{alg:if:horizon:reached} of algorithm \ref{alg:automatic_pdmp}.

In this case, as the potential is almost convex, the strategies show very similar results. However, it illustrates the general behavior of the algorithm depending on the number of grid points $N$ . The more grid points, the more accurate the upper bound (higher thinning acceptance rate and lower number of rejections). The fewer rejections, the less $\tmax$ is decreased during the process. Therefore, $\tmax$ increases, resulting in a lower number of times the horizon is reached without any event. As discussed in section \ref{sec:adaptive_tmax}, the ratio of the adaptation parameters $\frac{\alpha_+ - 1}{\alpha_- -1}$, here $.25$, is connected to the ratio between the number of time the horizon is reached and the number of events that are rejected.

\begin{table}[htbp]

    \centering
    \label{tab:high_dim}
    \begin{tabular}{rlrrrrr}
    \toprule
    
    N & signed & time & mean $\tmax$ & thinning AR & \# rejection & \# hitting horizon \\
    \midrule
    
    0 & False & 35.9 & 0.75 & 0.734 & 6.09e+05 & 2.76e+06 \\
    3 & False & 6.75 & 1.06 & 0.789 & 4.05e+05 & 1.81e+06 \\
    3 & True & 6.72 & 1.06 & 0.789 & 4.05e+05 & 1.81e+06 \\
    5 & False & 6.97 & 1.46 & 0.838 & 2.67e+05 & 1.17e+06 \\
    5 & True & 6.79 & 1.46 & 0.838 & 2.67e+05 & 1.17e+06 \\
    10 & False & 7.82 & 2.05 & 0.889 & 1.59e+05 & 6.90e+05 \\
    10 & True & 7.61 & 2.05 & 0.889 & 1.59e+05 & 6.90e+05 \\
    20 & False & 9.37 & 2.62 & 0.927 & 9.38e+04 & 4.01e+05 \\
    20 & True & 9.21 & 2.62 & 0.927 & 9.38e+04 & 4.01e+05 \\
    \bottomrule
    
    \end{tabular}
    \caption{Results for several quantities of interest in the high dimension setting}
    \end{table}

\section{Conclusion}
Starting from the automatic Zig-Zag algorithm \parencite{corbellaAutomaticZigZagSampling2022} , we have proposed a new method to compute an upper bound of the rate function of a PDMP. This method is based on a grid of the time space, and create a piecewise constant bound. We have shown that this upper bound is a correct upper bound of the rate function under some conditions, that depends on the critical points of the rate function and the length between points of the grid. Also, we modified the algorithm to get an automatic adaptation of the horizon $\tmax$.
We used the method on different PDMP samplers, and showed that it is more efficient than the automatic Zig-Zag algorithm in terms of errors in the upper bound, and in terms of computation time. Finally, the library is made to be easily used and the definition of a new PDMP can be done in a few lines of code. Further work could be done to create a more efficient adaptation of the horizon $\tmax$, in particular for targets with a complex geometry  and advanced PDMPs that incorporate geometric information.

% Add your bibliography here
\printbibliography

\end{document}